\def\doi{http://dx.doi.org/}
\newcommand{\be}{\begin{equation}}
\newcommand{\ee}{\end{equation}}
\newcommand{\bea}{\begin{eqnarray}}
\newcommand{\eea}{\end{eqnarray}}
\def\XXint#1#2#3{{\setbox0=\hbox{$#1{#2#3}{\int}$}
     \vcenter{\hbox{$#2#3$}}\kern-.5\wd0}}
\let\OLDthebibliography\thebibliography
\renewcommand\thebibliography[1]{
  \OLDthebibliography{#1}
  \setlength{\parskip}{0pt}
  \setlength{\itemsep}{0pt plus 0.3ex}
}
\newcommand{\sign}{\,{\rm sgn}\,}
\newcommand{\1}{\,\pmb{1}}
\newcommand{\D}[1]{{\rm d}#1}
\newtheorem{theorem}{Theorem}
\newtheorem{conjecture}{Conjecture}
\newtheorem{property}{Lemma}
\begin{document}
\begin{center}
{\Large\bf Regularization of a strong-weak duality between pointlike interactions in one dimension
}
\end{center}
\begin{center}
Etienne Granet\textsuperscript{1}
\end{center}
\begin{center}
{\bf 1} Kadanoff Center for Theoretical Physics, University of Chicago, 5640 South Ellis Ave, Chicago, IL 60637, USA
\end{center}
\date{\today}

\section*{Abstract}
{\bf 
Pointlike interactions between bosons in 1D are related to pointlike interactions between fermions through the Girardeau mapping. This mapping is a strong-weak duality since the coupling constants in the bosonic and fermionic cases are inversely proportional to each other. 
We present a regularization of these pointlike interactions that preserves the strong-weak duality, contrary to previously known regularizations. This is proven rigorously. This allows one to use this duality perturbatively and we illustrate it in the Lieb-Liniger model at strong coupling.

%
}

\section{Introduction}
It is well-known that bosons interacting in 1D quantum physics via a pointlike potential have a wave-function that is continuous when two bosons coincide, but with a discontinuous derivative \cite{Griffiths}. This $\delta$ interaction approximates a very short-range regular potential, with a coupling strength given by the integral of the potential. If the potential is weak then the cusp in the wave function is small, and the potential can be treated perturbatively around free bosons.

Conversely, fermions interacting via a pointlike potential have a wave-function that is \textit{discontinuous} when fermions coincide, and with a continuous derivative \cite{seba1,Albeverio}. This is dual to the bosonic case, in the sense that pointlike-interacting fermions can be exactly mapped to pointlike-interacting bosons via the Girardeau mapping \cite{G60,CS99,YG05}. The amplitudes of the discontinuity in both cases are inversely proportional, and for this reason the Girardeau mapping can also be seen as a duality between strong and weak interactions. This mapping is particularly useful in the context of quantum gases, providing an exact correspondance between the Lieb-Liniger model and the Cheon-Shigehara model \cite{LiebLiniger63a,CS99}, namely between strongly coupled bosons and weakly coupled fermions \cite{kinoshita}.

But in contrast with bosons, the interpretation of this pointlike potential between fermions in terms of distributions or as the zero-range limit of smooth potentials is problematic and has attracted attention in the past \cite{seba2,ADK98,kurasov,PB98,S99,S03,GO04,BC05,DSGDDK09,carreau,chernoff,CS98,albeverio00approximation,coutinho,RT96}. Regularizations of related pointlike interactions is still a topic of recent research \cite{christiansen03on,albeverio13a,calcada14distributional,fassari09on,albeverio15the,zolotaryuk19point,zolotaryuk17families,zolotaryuk15an,gadella,gadella2,kostenko}. At this day, the Cheon-Shigehara potential \cite{CS98} provides the most physical and operational interpretation of this interaction, since it is built as the zero-range limit of a Hermitian potential. However it breaks the strong-weak duality, since the regularized potential between fermions is also \textit{strongly coupled} for strongly coupled bosons. For this reason it is not amenable to perturbative expansion around free fermions. Up to now, no regularization of this fermionic pointlike interaction in terms of a smooth, regular Hamiltonian preserves the strong-weak duality.

%
%
 

The objective of this paper is to complete the companion paper \cite{GBE21} and to introduce a regularization of this pointlike interaction between fermions that preserves the strong-weak duality. Namely, as in \cite{GBE21}, we provide a smooth potential that in the zero-range limit produces a discontinuity in the fermionic wave function, that is proportional to the strength of the potential. This fact is proven rigorously. 
In order to show that our potential allows for a perturbative treatment, we present an application of our findings to the Lieb-Liniger model with coupling $c$ \cite{LiebLiniger63a,Brezin64,korepin}, dual to the Cheon-Shigehara model with coupling $1/c$. Our potential allows for a well-behaved perturbative expansion of the energy levels of the Cheon-Shigehara model at small coupling, and we show that we recover indeed the energy levels of the Lieb-Liniger model at strong coupling. This duality should prove useful to study expansions of the Lieb-Liniger model at large coupling \cite{jimbo,creamer,korepincorr,brandcherny,deNP15,GE20,GE21}.

\section{Pointlike interactions as self-adjoint extensions\label{result}}
\subsection{Self-adjoint extensions}
In 1D quantum physics, a Hamiltonian describing a pointlike interaction at position $x$ can be modelled by a self-adjoint extension of a free Hamiltonian on a region containing a neighbourhood of $x$. Namely, one considers a Hamiltonian whose matrix elements between wave functions $\psi,\phi$ are
\begin{equation}
\langle \phi|H|\psi\rangle=-\int_{-L/2}^{L/2}\phi^*(x)\left(\frac{\D{}}{\D{x}} \right)^2\psi(x)\D{x}\,,
\end{equation}
with e.g. periodic boundary conditions on a ring of size $L$, and one looks for the possible Hilbert spaces of functions that are smooth and integrable on $-L/2<x<0$ and $0<x<L/2$, such that $H$ is hermitian. This corresponds to imposing particular connection conditions on the wave function at the position of the potential $x=0$. The most general connection conditions on the wave function $\psi_\pm$ and its derivative $\psi'_\pm$ before and after the potential are \cite{Albeverio,seba1} 
\begin{equation}\label{gen}
\begin{cases}
\text{either}\quad\left(\begin{matrix}
\psi_+\\\psi_+'
\end{matrix} \right)=\Lambda\left(\begin{matrix}
\psi_-\\\psi_-'
\end{matrix} \right)\,,\qquad \Lambda=e^{i\theta}\left(\begin{matrix}
a&b\\c&d
\end{matrix} \right)\\
\text{or}\quad \psi'_\pm=h_\pm \psi_\pm
\end{cases}
\end{equation}
with $a,b,c,d,\theta$ reals that satisfy the constraint $ad-bc=1$, and $h_\pm$ reals. We will call $\Lambda$ the interaction matrix. Since the second possibility $\psi'_\pm=h_\pm \psi_\pm$ acts like a hard wall by decoupling $x>0$ and $x<0$,  we will focus only on the first possibility.

\subsection{Pointlike interactions between bosons}
From the general interaction matrix \eqref{gen}, one can investigate the possible pointlike interactions between bosons. This constrains the wave function to be even, so that $\psi_+=\psi_-$ and $\psi'_+=-\psi'_-$. This constrains $e^{i\theta}=1$ and $a=d$, leaving a two-parameter family of interactions whose interaction matrix is given by
\begin{equation}\label{boson}
\Lambda=\left(\begin{matrix}
a&b\\c&a
\end{matrix} \right)\,.
\end{equation}
For these interactions the bosonic wave function is continuous, but has a discontinuity in the derivative equal to $2\gamma \psi_+$ with $\gamma=\tfrac{a-1}{b}\in[-\infty,\infty]$. One notes that fixing $\gamma$ still leaves a one-parameter family of interactions. Varying this remaining parameter has no effect on bosons, but affects the behaviour of fermions interacting via the same potential. One particular value of this remaining parameter makes the potential transparent for fermions and corresponds to the interaction matrices
\begin{equation}\label{pureboson}
\Lambda=\left(\begin{matrix}
1&0\\2\gamma&1
\end{matrix} \right)\,.
\end{equation}
This purely bosonic pointlike interaction is the textbook "$\delta$-interaction", that owns its name to the fact that this discontinuity of the derivative can be produced by the following Schr\"odinger equation
\begin{equation}\label{eqdelta}
\psi''(x)+k^2\psi(x)=2\gamma \delta(x)\psi(x)\,.
\end{equation}
Indeed, by integrating on a small window $-\epsilon<x<\epsilon$ around $0$ and applying the usual rules on the $\delta$ distribution we obtain the condition $\psi'(0^+)-\psi'(0^-)=2\gamma \psi(0)$. However this manipulation should be considered as heuristic, since the product of a distribution with a non-smooth function is ill-defined. The proper way of understanding \eqref{eqdelta} and the boundary condition it induces is thus through \emph{regularization} of the $\delta$ distribution. It means that if one replaces the distribution $\delta(x)$ by a smooth function $\delta_a(x)\equiv \frac{1}{a}\Delta(x/a)$ with 
\begin{equation}
\int \Delta(x)\D{x}=1\,,
\end{equation}
then the even wave function $\psi_a(x)$ solution to this regularized equation 
\begin{equation}\label{shrod1}
\psi_a''(x)+k^2\psi_a(x)=2\gamma \delta_a(x)\psi_a(x)\,,
\end{equation}
satisfies when $a\to 0$
\begin{equation}\label{shrod0}
\begin{cases}
 \psi''(x)+k^2\psi(x)=0\,,\qquad \text{for }x\neq 0\\
\psi(0+)=\psi(0-)\\
\psi'(0+)-\psi'(0-)=2\gamma\psi(0)
\end{cases}\,.
\end{equation}
 This interpretation is also physical: a pointlike potential is nothing but an idealized version of a very short-range smooth potential $\delta_a(x)$ convenient to the theoretician.

\subsection{Pointlike interactions between fermions}
One can as well investigate the possible pointlike interactions for fermions, i.e. for which the resulting wave function can be odd. One checks that this constraint leads to the same general interaction matrix \eqref{boson} as for bosons. This time, the derivative of the fermionic wave function is continuous, but the wave-function itself acquires a discontinuity equal to $2\beta \psi_\pm$ with $\beta=\tfrac{b}{1+a}$. Fixing $\beta$ still leaves a one-parameter family of interactions. Varying this remaining parameter has no effect on fermions, but changes the behaviour of bosons that interact via the same potential. One particular value of this remaining parameter makes the potential transparent for bosons and corresponds to the interaction matrices
\begin{equation}\label{free}
\Lambda=\left(\begin{matrix}
1&2\beta\\0&1
\end{matrix} \right)\,.
\end{equation}
Another choice is an interaction that is hard-core for bosons, namely that impose a bosonic wave function to vanish at the position of the potential. This corresponds to the interaction matrix
\begin{equation}\label{core}
\Lambda=\left(\begin{matrix}
-1&0\\-2\beta&-1
\end{matrix} \right)\,.
\end{equation}
These two interaction matrices have the same effect on fermions, and implement the connection conditions
\begin{equation}\label{shrod78}
\begin{cases}
\psi'(0^+)=\psi'(0^-)\\
\psi(0^+)-\psi(0^-)=2\beta\psi'(0)
\end{cases}\,,
\end{equation}
for odd wave functions $\psi$. But in contrast with the bosonic pointlike interaction, the realization of this discontinuity through a regular or distributional potential is problematic and has attracted a lot of controversies in the past. In particular, its widespread name "$\delta'$-interaction" is improper \cite{seba2}. In fact, the heuristic analog of the Schr\"odinger equation \eqref{eqdelta} to this case is \cite{ADK98}
\begin{equation}\label{eqdelta2}
\psi''(x)+k^2\psi(x)=2\beta \partial_x[\delta(x)\partial_x]\psi(x)\,.
\end{equation}
Indeed, by integrating $x$ for $-\epsilon<x<y$ and then $y$ for $-\epsilon<y<\epsilon$ and applying the usual rules on the $\delta$ distribution yields $\psi(0^+)-\psi(0^-)=2\beta \psi'(0)$. But we stress that this manipulation is again heuristic, since the product of distributions $\partial_x[\delta(x)\partial_x]\psi(x)$ is  ill-defined, and requires a generalization of distributions to discontinuous test functions \cite{kurasov,PB98}. Although mathematically sound, these generalizations suffer from a lack of physical meaning since one loses the interpretation of the potential as an approximation of a very short range smooth potential. Contrary to the previous case \eqref{eqdelta} indeed, regularizing the distribution $\delta(x)$ into a smooth potential $\delta_a(x)$ in \eqref{eqdelta2} \textit{does not} yield the expected discontinuity when $a\to 0$. Namely, the odd solution to
\begin{equation}\label{eqdelta3}
\psi_a''(x)+k^2\psi_a(x)=2\beta \partial_x[\delta_a(x)\partial_x]\psi_a(x)\,.
\end{equation}
is \textit{continuous} at $0$ when $a\to 0$, as shown in Appendix \ref{app}. 

It is thus a non-trivial problem to find a regular potential $V_a(x)$ such that the odd solution $\psi_a(x)$ to the Schr\"odinger equation
\begin{equation}\label{shord4}
\psi_a''(x)+k^2\psi_a(x)=V_a(x)\psi_a(x)\,,
\end{equation}
satisfies when $a\to 0$ 
\begin{equation}\label{shrod5}
\begin{cases}
 \psi''(x)+k^2\psi(x)=0\,,\qquad \text{for }x\neq 0\\
\psi'(0^+)=\psi'(0^-)\\
\psi(0^+)-\psi(0^-)=2\beta\psi'(0)
\end{cases}\,.
\end{equation}
Different solutions to this problem in terms of non-Hermitian potentials, non-local potentials or pseudo-potentials were proposed \cite{carreau,chernoff,RT96}. The first solution in terms of a Hermitian, regular potential was found by Cheon and Shigehara \cite{CS98} and reads
\begin{equation}\label{cheon}
V_{a,\beta}(x)=\left( \frac{1}{\beta}-\frac{1}{a}\right)\delta(x+a)+2\left(\frac{\beta}{a^2}-\frac{1}{a} \right)\delta(x)+\left(\frac{1}{\beta}-\frac{1}{a} \right)\delta(x-a)\,.
\end{equation}
The $\delta$ distributions therein can be further regularized on a  smaller enough scale than $a$ \cite{exner01potential,zolotaryuk17families}. This regularization corresponds to the interaction matrix \eqref{free} that is transparent for bosons.

\section{Regularization preserving the strong-weak duality}
\subsection{The potential}
In this Section we consider the case $\beta>0$ of the interaction matrix \eqref{core} and the induced connection conditions \eqref{shrod78}. In this case we show that it can be regularized into a smooth potential that is small for small $\beta$. Namely, we prove the following Theorem.
\begin{theorem}\label{theorem}
We consider $\beta>0$ and define the following potential
\begin{equation}\label{potential}
V_{a,\beta}(x)= \frac{\beta\sigma''_a(x)}{x+\beta \sigma_a(x)}\,,
\end{equation}
with $\sigma_a(x)\equiv \sigma(x/a)$ where $\sigma(x)$ is any odd regular function that satisfies
\begin{equation}\label{sigma}
\begin{aligned}
\underset{x\to\infty}{\lim}\, \sigma(x)&=1\\
\forall x,\,\sigma'(x)&\geq 0\\
\sigma'(0)&> 0\\
\underset{x\to\infty}{\lim}\, x^2\sigma''(x)&=0\,.
\end{aligned}
\end{equation}
(For example, one can choose $\sigma_a(x)=\tanh \tfrac{x}{a}$.)
Denote $\psi_{a,\beta}(x)$ the odd solution to the Schr\"odinger equation
\begin{equation}\label{eqv}
-\psi_{a,\beta}''(x)+V_{a,\beta}(x)\psi_{a,\beta}(x)=k^2\psi_{a,\beta}(x)\,,
\end{equation}
with a fixed boundary condition $\psi_{a,\beta}(1)=1$. In the limit $a\to 0$ at fixed $\beta>0$, it satisfies
\begin{equation}\label{shrod2}
\begin{cases}
\psi''(x)+k^2\psi(x)=0\,,\qquad \text{for }x\neq 0\\
\psi'(0^+)=\psi'(0^-)\\
\psi(0^+)-\psi(0^-)=2\beta\psi'(0)\,.
\end{cases}
\end{equation}
\end{theorem}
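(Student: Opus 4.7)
The plan rests on the algebraic identity $V_{a,\beta}(x)=\phi_a''(x)/\phi_a(x)$ with $\phi_a(x):=x+\beta\sigma_a(x)$, which expresses that $\phi_a$ is itself a zero-energy solution of $-\partial_x^2+V_{a,\beta}$. The hypotheses on $\sigma$ ensure that $\phi_a$ is smooth, odd, strictly increasing, vanishes only at $x=0$, and converges uniformly on $\{|x|\geq\delta\}$ to $x+\beta\,\mathrm{sgn}(x)$ for every $\delta>0$. Since $V_{a,\beta}$ is even and we seek odd solutions, it suffices to work on $[0,1]$. I would introduce the Darboux change of unknown $\psi_{a,\beta}=\phi_a u_a$; a direct computation recasts \eqref{eqv} in divergence form,
\[
\bigl(\phi_a^2(x)\,u_a'(x)\bigr)'=-k^2\,\phi_a^2(x)\,u_a(x)\,.
\]
The ratio $u_a$ extends continuously up to $x=0$ with $u_a(0)=\psi_{a,\beta}'(0)/\phi_a'(0)$, since both $\psi_{a,\beta}$ and $\phi_a$ vanish there.

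Integrating once from $0$ to $x>0$, the boundary term at $0$ drops out thanks to $\phi_a(0)=0$, yielding
\[
\phi_a^2(x)\,u_a'(x)=-k^2\int_0^x \phi_a^2(y)\,u_a(y)\,\D{y}\,.
\]
Dividing by $\phi_a^2$ and integrating once more from $x$ to $1$, together with the normalization $u_a(1)=1/\phi_a(1)$, produces a closed Volterra-type integral equation for $u_a$ on $[0,1]$. The crucial uniform estimate is simply $\phi_a^2(y)/\phi_a^2(s)\leq 1$ for $0\leq y\leq s$, which is immediate from the monotonicity of $\phi_a$. A standard Picard/Gr\"onwall iteration on this Volterra equation then gives $\sup_{a\in(0,1]}\|u_a\|_{L^\infty([0,1])}<\infty$, while the once-integrated ODE supplies Lipschitz bounds on $u_a$ uniformly on any compact subset of $(0,1]$ on which $\phi_a^2$ is bounded away from $0$.

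Arzel\`a--Ascoli then extracts a subsequence $u_{a_n}\to u$ converging uniformly on every $[\delta,1]$, and dominated convergence (using the $a$-independent bound $\phi_a^2(y)/\phi_a^2(s)\leq 1$) identifies $u$ as the unique solution of $((x+\beta)^2 u'(x))'=-k^2(x+\beta)^2 u(x)$ on $(0,1)$ with $u(1)=1/(1+\beta)$, so the full family $\{u_a\}$ converges. Passing the once-integrated equation to the limit and letting $x\to 0^+$ additionally gives $u'(0^+)=0$. Setting $\psi(x):=(x+\beta)u(x)$ on $(0,1]$ produces a solution of $-\psi''=k^2\psi$ there, and at the origin one has
\[
\psi(0^+)=\beta\,u(0^+)\,,\qquad \psi'(0^+)=u(0^+)+\beta\,u'(0^+)=u(0^+)\,,
\]
so that $\psi(0^+)=\beta\,\psi'(0^+)$. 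Extending $\psi$ as an odd function then produces $\psi'(0^-)=\psi'(0^+)$ and $\psi(0^-)=-\psi(0^+)$, and the two connection conditions in \eqref{shrod2} follow at once.

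The main obstacle is the combination of a degenerate weight $\phi_a^2$ vanishing at $x=0$ with the reciprocal weight $1/\phi_a^2$ in the Volterra kernel. Uniformity across the boundary layer $|x|\lesssim a$, where $\phi_a^2(x)\sim (\beta\sigma'(0))^2\,x^2/a^2$, is precisely where the hypothesis $\sigma'(0)>0$ enters to prevent $1/\phi_a^2$ from being too singular there; the condition $\lim_{x\to\infty}x^2\sigma''(x)=0$ is needed to keep $V_{a,\beta}$ bounded at the outer edge of the layer, which is required for the ODE bounds on $u_a$. Once these ingredients are secured, the Darboux factorization makes the remainder of the argument essentially standard ODE analysis.
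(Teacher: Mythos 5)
Your starting point is sound and closely parallels the paper's: your $\phi_a(x)=x+\beta\sigma_a(x)$ is exactly the paper's explicit zero-energy odd solution $\psi^0_{a,\beta}$, and your divergence-form rewriting $(\phi_a^2 u_a')'=-k^2\phi_a^2 u_a$ is a tidy substitute for the paper's variation-of-parameters setup (it even spares you the paper's delicate analysis of the second zero-energy solution $\phi^0_{a,\beta}$, replacing it by the monotonicity bound $\phi_a^2(y)/\phi_a^2(s)\leq 1$). However, there is a genuine gap at your central estimate. The integral equation you obtain by integrating once from $0$ and then from $x$ to $1$,
\begin{equation*}
u_a(x)=\frac{1}{\phi_a(1)}+k^2\int_x^1\frac{1}{\phi_a^2(s)}\int_0^s \phi_a^2(y)\,u_a(y)\,\D{y}\,\D{s}\,,
\end{equation*}
is \emph{not} of Volterra type: exchanging the order of integration gives the kernel $K(x,y)=k^2\,\phi_a^2(y)\int_{\max(x,y)}^1\phi_a^{-2}(s)\,\D{s}$, which couples every pair $(x,y)$ in $[0,1]^2$ --- it is the Green kernel of the Sturm--Liouville problem $(\phi_a^2u')'=-k^2\phi_a^2u$ with $u'\to0$ at $0$ and $u$ prescribed at $1$. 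Picard iteration on such a Fredholm equation converges only for $k^2$ below the first eigenvalue of that problem (your bound $\phi_a^2(y)/\phi_a^2(s)\leq1$ yields operator norm at most $k^2/2$, hence a contraction only for $k^2<2$), and Gr\"onwall's lemma does not apply to a kernel that looks both forward and backward. Worse, the uniform bound $\sup_{a}\|u_a\|_{L^\infty}<\infty$ that you claim is actually \emph{false} when $k^2$ coincides with an eigenvalue of the limiting problem, because the normalization $u_a(1)=1/\phi_a(1)$ (i.e. $\psi_{a,\beta}(1)=1$) then forces blow-up as $a\to0$; no iteration scheme can prove it unconditionally.

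The repair is small and reproduces the paper's architecture: anchor \emph{both} integrations at $0$. Imposing $u_a(0)=1$ instead of the condition at $x=1$ gives the genuinely triangular Volterra equation
\begin{equation*}
u_a(x)=1-k^2\int_0^x\frac{1}{\phi_a^2(s)}\int_0^s \phi_a^2(y)\,u_a(y)\,\D{y}\,\D{s}\,,
\end{equation*}
whose kernel is bounded by $k^2$ thanks to your estimate, so Picard/Gr\"onwall gives bounds uniform in $a$ for \emph{every} $k$; dominated convergence then identifies the limit and yields $u'(0^+)=0$ exactly as you describe, and only at the very end does one divide by $\phi_a(1)u_a(1)$ to restore $\psi_{a,\beta}(1)=1$, which requires the (generic in $k$) nonvanishing of the limit of $u_a(1)$. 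This is precisely how the paper proceeds: it fixes the coefficient $A$ of the regular solution, derives the Volterra equation \eqref{self} with both integrals of the form $\int_0^x$ for the unnormalized $\tilde{\psi}_{a,\beta}$, proves uniform bounds and convergence by Gr\"onwall and dominated convergence, and normalizes last via \eqref{ewd} under the caveat $f_\beta(1)\neq0$. With that one change of anchoring, your matching computation at the origin ($\psi(0^+)=\beta u(0^+)$, $\psi'(0^+)=u(0^+)$, odd extension) is correct and completes a valid, and in places simpler, proof.
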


\subsection{Proof of Theorem \ref{theorem} \label{nonpert}}
\subsubsection{Case $k=0$}
We consider first $k=0$. The Schr\"odinger equation \eqref{eqv} is a linear differential equation of degree $2$ with an even potential, so it has two linearly independent solutions, one odd $\psi^0_{a,\beta}$ and one even $\phi^0_{a,\beta}$. One verifies that the following two wave functions satisfy  \eqref{eqv}  for $k=0$ indeed
\begin{equation}
\begin{aligned}
\psi^0_{a,\beta}(x)&=x+\beta\sigma_a(x)\\
\phi^0_{a,\beta}(x)&=\frac{1}{1+\beta\sigma_a'(x)}+(x+\beta \sigma_a(x))\int_{0}^x \frac{\beta\sigma_a''(y)}{(y+\beta\sigma_a(y))(1+\beta\sigma_a'(y))^2}\D{y}\,.
\end{aligned}
\end{equation}
The odd solution $\psi^0_{a,\beta}(x)$ has a well-defined limit when $a\to 0$
\begin{equation}
\psi^0_{0,\beta}(x)=x+\beta\sign(x)\,,
\end{equation}
that satisfies \eqref{shrod2}.

\subsubsection{A self-consistent equation for $k\neq 0$}
Using the method of variation of parameters, the solutions to the equation
\begin{equation}
\psi''(x)-V_{a,\beta}(x)\psi(x)=f(x)\,,
\end{equation}
for a given arbitrary function $f(x)$ are 
\begin{equation}
\begin{aligned}
\psi(x)=&(x+\beta\sigma_a(x))\int_{0}^x f(y)\phi^0_{a,\beta}(y)\D{y}-\phi^0_{a,\beta}(x)\int_{0}^x f(y) (y+\beta\sigma_a(y))\D{y}\\
&+A(x+\beta\sigma_a(x))+B\phi^0_{a,\beta}(x)\,,
\end{aligned}
\end{equation}
with $A,B$ integration constants. By setting $f(x)=-k^2\psi(x)$ and imposing $\psi(x)$ odd, one obtains a self-consistent equation for $\psi_{a,\beta}(x)$ for $k\neq 0$
\begin{equation}
\begin{aligned}\label{self2}
\psi_{a,\beta}(x)=&-k^2(x+\beta\sigma_a(x))\int_{0}^x \psi_{a,\beta}(y)\phi^0_{a,\beta}(y)\D{y}+k^2\phi^0_{a,\beta}(x)\int_{0}^x \psi_{a,\beta}(y) (y+\beta\sigma_a(y))\D{y}\\
&+A_a(x+\beta\sigma_a(x))\,,
\end{aligned}
\end{equation}
where we made explicit the $a$ dependence in the integration constant $A_a$, that is such that $\psi_{a,\beta}(1)=1$. Let us now fix an integration constant $A$ independent of $a$, and define $\tilde{\psi}_{a,\beta}(x)$ as the odd solution to \eqref{eqv} but with an $a$-dependent boundary condition at $x=1$ such that
\begin{equation}
\begin{aligned}\label{self}
\tilde{\psi}_{a,\beta}(x)=&-k^2(x+\beta\sigma_a(x))\int_{0}^x \tilde{\psi}_{a,\beta}(y)\phi^0_{a,\beta}(y)\D{y}+k^2\phi^0_{a,\beta}(x)\int_{0}^x \tilde{\psi}_{a,\beta}(y) (y+\beta\sigma_a(y))\D{y}\\
&+A(x+\beta\sigma_a(x))\,.
\end{aligned}
\end{equation}
We have
\begin{equation}\label{ewd}
\psi_{a,\beta}(x)=\frac{\tilde{\psi}_{a,\beta}(x)}{\tilde{\psi}_{a,\beta}(1)}\,.
\end{equation}
Given that \eqref{shrod2} is invariant under multiplication of the wave function by a constant, the problem is equivalent to showing that $\tilde{\psi}_{a,\beta}$ satisfies \eqref{shrod2} in the limit $a\to 0$, and that $\tilde{\psi}_{a,\beta}(1)$ has a finite non-zero limit when $a\to 0$.

\subsubsection{Study of $\phi^0_{a,\beta}(x)$\label{phihi}}
To go further, one needs to study the function $\phi^0_{a,\beta}(x)$ when $a\to 0$. Let us define the quantity
\begin{equation}
I_a(x)=\int_0^x \frac{\beta\sigma_a''(y)}{(y+\beta\sigma_a(y))(1+\beta\sigma_a'(y))^2}\D{y}\,.
\end{equation}
Considering $x>0$ and a fixed $C>0$, we do a change of variable $y=at$ to write
\begin{equation}\label{equt}
I_a(x)=a\int_0^C \frac{\beta\sigma''(t)}{(at+\beta\sigma(t))(a+\beta\sigma'(t))^2}\D{t}+a\int_C^{x/a} \frac{\beta\sigma''(t)}{(at+\beta\sigma(t))(a+\beta\sigma'(t))^2}\D{t}\,.
\end{equation}
Using $\sigma(t),\sigma'(t)\geq 0$, the integrand of the first term is bounded by $|\tfrac{\sigma''(t)}{\beta^2\sigma(t)\sigma'(t)^2}|$, which is integrable on $[0,C]$ since $\sigma(t)$ is odd and $\sigma'(0)>0$ by assumption. Hence the first terms goes to $0$ when $a\to 0$. We do an integration by part on the second term to write
\begin{equation}\label{equty}
\begin{aligned}
a\int_C^{x/a} \frac{\beta\sigma''(t)}{(at+\beta\sigma(t))(a+\beta\sigma'(t))^2}\D{t}=&-\frac{a}{(x+\beta\sigma(x/a))(a+\beta\sigma'(x/a))}\\
&+\frac{a}{(aC+\beta\sigma(C))(a+\beta\sigma'(C))}\\
&-a\int_C^{x/a}\frac{\D{t}}{(at+\beta\sigma(t))^2}\,.
\end{aligned}
\end{equation}
The first term on the right-hand side has a finite limit when $a\to 0$ for $x>0$ given by
\begin{equation}
\underset{a\to 0}{\lim}\, -\frac{a}{(x+\beta\sigma(x/a))(a+\beta\sigma'(x/a))}=-\frac{1}{x+\beta}\,.
\end{equation}
The second term on the right-hand side of \eqref{equty} goes to zero when $a\to 0$. To investigate the third term, we fix a $\epsilon>0$ and take $C$ large enough such that $\sigma(t)\geq 1-\epsilon$ for $t\geq C$. Then for $a\leq x/C$, one has
\begin{equation}
a\int_C^{x/a}\frac{\D{t}}{(at+\beta\sigma(t))^2}\leq a\int_C^{x/a}\frac{\D{t}}{(at+\beta(1-\epsilon))^2}=-\frac{1}{x+\beta(1-\epsilon)}+\frac{1}{aC+\beta(1-\epsilon)}\,.
\end{equation}
Moreover, since $\sigma'(t)\geq 0$ and $\sigma(t)\to 1$ when $t\to\infty$, we have $\sigma(t)\leq 1$. This yields for $a\leq x/C$
\begin{equation}
a\int_C^{x/a}\frac{\D{t}}{(at+\beta\sigma(t))^2}\geq a\int_C^{x/a}\frac{\D{t}}{(at+\beta)^2}=-\frac{1}{x+\beta}+\frac{1}{aC+\beta}\,.
\end{equation}
Since this is valid for any $\epsilon>0$, one concludes that for $x>0$
\begin{equation}
\underset{a\to 0}{\lim}\, I_a(x)=-\frac{1}{\beta}\,.
\end{equation}
From this we conclude that for $x\neq 0$ the limit $a\to 0$ of $\phi^0_{a,\beta}(x)$ is
\begin{equation}
\phi^0_{0,\beta}(x)=-\frac{|x|}{\beta}\,.
\end{equation}

Let us now show that $\phi^0_{a,\beta}(x)$ can be bounded independently of $a$ uniformly for $x\in[0,M]$, for a fixed $M>0$. Since the first term of \eqref{equt} is independent of $x$ and goes to zero when $a\to 0$, it can be bounded independently of $a$ uniformly for $x\in[0,M]$. The same holds true for the second term on the right-hand side of \eqref{equty}. The first term on the right-hand side can be bounded as
\begin{equation}
\left|\frac{a}{(x+\beta\sigma(x/a))(a+\beta\sigma'(x/a))}\right|\leq \frac{1}{x+\beta \sigma(x/a)}\,.
\end{equation}
When multiplied by $x+\beta\sigma(x/a)$ in $\phi_{a,\beta}^0(x)$, this gives a term that can be bounded uniformly as well. We now focus on the third term on the right-hand side of \eqref{equty}. Let us choose $C>0$ such that for $0<x<C$, $\sigma(x)\geq x\sigma'(0)/2$. This exists indeed since $\sigma'(0)>0$. Then, we write the bound
\begin{equation}
\frac{1}{(at+\beta\sigma(t))^2}\leq \begin{cases}
\frac{1}{(\beta\sigma(C))^2}\quad \text{for }t>C\\
\frac{1}{(a+\beta\sigma'(0)/2)^2}\frac{1}{t^2}\quad \text{for }t<C
\end{cases}\,.
\end{equation}
It follows that
\begin{equation}
\left|a\int_C^{x/a}\frac{\D{t}}{(at+\beta\sigma(t))^2} \right|\leq \begin{cases}
\frac{|x-aC|}{(\beta\sigma(C))^2}\quad \text{for }x>aC\\
\frac{4a}{(\beta\sigma'(0))^2}\left|\frac{a}{x}-\frac{1}{C} \right|\quad \text{for }x<aC
\end{cases}\,.
\end{equation}
Using that $\tfrac{\sigma(x)}{x}$ and so $\tfrac{\sigma(x/a)}{x/a}$ are bounded uniformly in $x$ independently of $a$,  we conclude thus
%
that $\phi^0_{a,\beta}(x)$ can be bounded independently of $a$ uniformly in $x\in[0,M]$.

\subsubsection{Uniform bound on $\tilde{\psi}_{a,\beta}(x)$ \label{grom}}
Using that both $\sigma_a(x)$ and $\phi_{a,\beta}(x)$ can be bounded independently of $a$ uniformly in $x\in[-M,M]$, we see from \eqref{self} that one can find constants $\lambda\geq 0,\mu\geq 0$ (that depend on $M,k,\beta$, but not on $a$) such that for all $x\in[-M,M]$
\begin{equation}\label{ineq}
|\tilde{\psi}_{a,\beta}(x)|\leq \lambda \left|\int_{0}^x |\tilde{\psi}_{a,\beta}(y)|\D{y}\right|+\mu\,.
\end{equation}
We consider $x>0$ and apply Gr\"onwall's lemma. Introducing $\kappa(x)$ through
\begin{equation}
\int_{0}^x |\tilde{\psi}_{a,\beta}(y)|\D{y}=e^{\lambda x}\kappa(x)\,,
\end{equation}
one finds from \eqref{ineq}
\begin{equation}
\kappa'(x)\leq \mu e^{-\lambda x}\,.
\end{equation}
Hence
\begin{equation}
\kappa(x)\leq \mu \int_{0}^x e^{-\lambda y}\D{y}\,,
\end{equation}
and so for $x>0$
\begin{equation}
|\tilde{\psi}_{a,\beta}(x)|\leq \lambda\mu e^{\lambda x}\int_{0}^x e^{-\lambda y}\D{y}+\mu\,,
\end{equation}
which is a bound independent of $a$. Hence $\tilde{\psi}_{a,\beta}(x)$ can be bounded independently of $a$ uniformly in $x\in[-M,M]$.\\

\subsubsection{Existence of $\tilde{\psi}_{a,\beta}(x)$ in the limit $a\to 0$ \label{exi}}
We now wish to prove that the limit
\begin{equation}
\underset{a\to 0}{\lim}\, \tilde{\psi}_{a,\beta}(x)
\end{equation}
exists. 
Let us consider $f_{\beta}(x)$ satisfying \eqref{shrod2}. One checks that it satisfies
\begin{equation}
\begin{aligned}
f_{\beta}(x)=&-\frac{k^2}{\beta}(x+\beta\sign(x))\int_0^x f_{\beta}(y)|y|\D{y}-\frac{k^2}{\beta}|x|\int_0^x f_{\beta}(y)(y+\beta \sign(y))\D{y}\\
&+A'(x+\beta\sign(x))\,,
\end{aligned}
\end{equation}
with $A'$ to determine from the boundary condition. Let us choose it such that $A'=A$. Then from the self-consistent equation \eqref{self} we have
\begin{equation}
\begin{aligned}
\tilde{\psi}_{a,\beta}(x)-f_{\beta}(x)&=\int_{0}^x \tilde{\psi}_{a,\beta}(y)(g_a(x,y)-g_0(x,y))\D{y}+\int_{0}^x g_b(x,y)(\tilde{\psi}_{a,\beta}(y)-f_{\beta}(y))\D{y}\\
&+A\beta (\sigma_a(x)-\sign(x))\,,
\end{aligned}
\end{equation}
with
\begin{equation}
g_a(x,y)=-k^2(x+\beta\sigma_a(x))\phi^0_{a,\beta}(y)+k^2(y+\beta\sigma_a(y))\phi^0_{a,\beta}(x)\,.
\end{equation}
Since $\sigma_a(x)$, $\tilde{\psi}_{a,\beta}(x)$ and $\phi^0_{a,\beta}(x)$ can be bounded in $x$ independently of $a$, $g_a(x,y)$ can also be bounded in $x,y$ independently of $a$. Hence
\begin{equation}
\begin{aligned}
\left|\int_{0}^x g_b(x,y)(\tilde{\psi}_{a,\beta}(y)-f_{\beta}(y))\D{y}\right|\leq  \lambda \left|\int_0^x |\tilde{\psi}_{a,\beta}(y)-f_\beta(y)|\D{y}\right|\,,
\end{aligned}
\end{equation}
for some $\lambda>0$. Besides, since $\psi_{a,\beta}(y)(g_a(x,y)-g_0(x,y))$ is uniformly bounded in $y$ independently of $a$, and since it goes to zero when $a\to 0$ at fixed $y$, one can conclude from dominated convergence theorem that the quantity
\begin{equation}
 \int_{0}^x \tilde{\psi}_{a,\beta}(y)(g_a(x,y)-g_0(x,y))\D{y}
\end{equation}
is a function of $x>0$ that vanishes when $a\to 0$. The same holds for $A\beta(\sigma_a(x)-\sign(x))$. It follows that one can write
\begin{equation}
|\tilde{\psi}_{a,\beta}(x)-f_\beta(x)|\leq \lambda \left|\int_0^x |\tilde{\psi}_{a,\beta}(y)-f_\beta(y)|\D{y} \right|+\mu_a(x)\,,
\end{equation}
with $\mu_a(x)>0$ that is bounded uniformly in $x$ independently of $a$, and with $\mu_a(x)\to 0$ when $a\to 0$ for $x>0$. Repeating the arguments of Section \ref{grom}, one obtains
\begin{equation}
|\tilde{\psi}_{a,\beta}(x)-f_\beta(x)|\leq \lambda e^{\lambda x}\int_0^x \mu_a(y)e^{-\lambda y}\D{y}+\mu_a(x)\,.
\end{equation}
Since $\mu_a(x)$ is bounded uniformly in $x$, one can apply dominated convergence theorem and deduce that the limit of the left-hand side when $a\to 0$ exists and vanishes. This means
\begin{equation}
\underset{a\to 0}{\lim}\, \tilde{\psi}_{a,\beta}(x)=f_\beta(x)\,.
\end{equation}
Hence by construction it satisfies \eqref{shrod2}. Besides, since one can choose $A$ such that $f_\beta(1)\neq 0$, it follows that the limit of $\tilde{\psi}_{a,\beta}(1)$ in \eqref{ewd} exists and is non-zero. This concludes the proof of the Theorem.

\section{Commuting the limits $a\to 0$ and $\beta\to 0$  \label{pertee}}
The objective of this section is to give numerical evidence for the possibility of commuting the $a\to 0$ and $\beta\to 0$ limits in the solution to the Shr\"odinger equation \eqref{eqv}. Namely, we have the following Conjecture.
\begin{conjecture}\label{conjecture}
The function $\psi_{a,\beta}(x)$ of Theorem \ref{theorem} admits an expansion in $\beta$ at fixed $a$ and $x$
\begin{equation}\label{exp}
\psi_{a,\beta}(x)=\psi_{a}^{(0)}(x)+\beta\psi_{a}^{(1)}(x)+\beta^2\psi_{a}^{(2)}(x)+...
\end{equation}
such that the limit $a\to 0$ for $x\neq 0$ of each term $\psi_{a}^{(n)}(x)$ exists, and \eqref{shrod2} is satisfied order by order in $\beta$.
\end{conjecture}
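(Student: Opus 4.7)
The plan is to follow the structure of the proof of Theorem \ref{theorem}, now carrying out the expansion in $\beta$ order by order. At fixed $a>0$, the geometric series
\[
V_{a,\beta}(x) = \sum_{n\geq 1} \beta^n V_{a,n}(x), \qquad V_{a,n}(x) = \frac{(-\sigma_a(x))^{n-1}\sigma_a''(x)}{x^n},
\]
converges for $|\beta|$ smaller than an $a$-dependent radius. Standard analytic dependence of linear-ODE solutions on parameters then gives the expansion \eqref{exp}, whose coefficients are determined by matching powers of $\beta$ in \eqref{eqv}:
\[
\psi_a^{(N)\prime\prime}(x) + k^2\psi_a^{(N)}(x) = \sum_{n=1}^N V_{a,n}(x)\psi_a^{(N-n)}(x),
\]
with $\psi_a^{(0)}(x) = \sin(kx)/\sin(k)$ and $\psi_a^{(N)}(1)=0$ for $N\geq 1$ obtained by expanding the normalization $\psi_{a,\beta}(1)=1$.

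I would then proceed by induction on $N$. Writing $\psi_a^{(N)}$ via variation of parameters against the free basis $\{\sin(kx),\cos(kx)\}$ (the sources are odd, so the particular solution is automatically odd),
\[
\psi_a^{(N)}(x) = A_N^{(a)}\sin(kx) + \frac{1}{k}\int_0^x \sin\!\big(k(x-y)\big) \sum_{n=1}^N V_{a,n}(y)\psi_a^{(N-n)}(y)\,\D{y},
\]
assume inductively that the limits $\psi_0^{(m)}(x) := \lim_{a\to 0}\psi_a^{(m)}(x)$ exist for $m<N$ and $x\neq 0$, with the piecewise-smooth structure dictated by \eqref{shrod2}. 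Since $\sigma_a''(x) = a^{-2}\sigma''(x/a)$ is concentrated near $0$ while $\sigma_a(x)\to\sign(x)$ pointwise for $x\neq 0$, one identifies by the change of variables $y=at$ and the integration-by-parts manipulations of Section \ref{phihi} the distributional limits
\[
V_{a,n}(x) \psi_a^{(N-n)}(x) \longrightarrow c_{n,N}\,\delta'(x) + d_{n,N}\,\delta(x),
\]
with coefficients $c_{n,N},d_{n,N}$ determined by moments of $\sigma$ and by the values $\psi_0^{(N-n)}(0^\pm)$, $\psi_0^{(N-n)\prime}(0)$ furnished by the induction hypothesis. Away from $x=0$, dominated convergence plus the uniform bounds on $\sigma_a, \phi_{a,\beta}^0$ from Section \ref{phihi} and the Grönwall-type argument of Section \ref{grom} yield the existence of $\psi_0^{(N)}(x)$. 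Integrating the ODE across the origin turns the distributional source into the jump relation
\[
\psi_0^{(N)}(0^+) - \psi_0^{(N)}(0^-) = 2\,\psi_0^{(N-1)\prime}(0), \qquad \psi_0^{(N)\prime}(0^+) = \psi_0^{(N)\prime}(0^-),
\]
which is precisely \eqref{shrod2} expanded at order $N$ in $\beta$; the bulk ODE for $x\neq 0$ reduces to the free equation by construction.

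The main obstacle is identifying the distributional source at each order. At $n\geq 2$ the factor $(-\sigma_a)^{n-1}\sigma_a''/x^n$ is a highly singular object concentrated at $0$, and it multiplies the lower-order function $\psi_a^{(N-n)}$, which the induction hypothesis only guarantees to be piecewise smooth with a jump precisely at $0$. Products of distributions with test functions that are discontinuous on the singular support are famously ambiguous, and obtaining a universal answer requires exploiting the oddness of $\sigma$ together with a principal-value/symmetric prescription compatible with the assumptions \eqref{sigma}. A plausibly cleaner route—the one I would actually pursue—is to avoid expanding $V_{a,\beta}$ altogether and instead expand the self-consistent equation \eqref{self} order by order in $\beta$. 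Its kernel is built entirely from $x+\beta\sigma_a(x)$ and $\phi_{a,\beta}^0(x)$, whose $a\to 0$ limits were explicitly controlled in Section \ref{phihi}; this effectively resums the singular contributions and should reduce the induction step to tracking the $\beta$-expansion of two explicit functions against which the inductive Grönwall/dominated-convergence argument applies verbatim.
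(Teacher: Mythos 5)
You are attempting to prove a statement that the paper itself deliberately leaves as a \emph{conjecture}: the paper's Section \ref{pertee} proves only Lemma \ref{lemmaa} --- an exact recursion \eqref{rec} for the expansion coefficients at fixed $a$, obtained after extracting the resummed prefactor $\bigl(1+\beta\tfrac{\sigma_a(x)}{x}\bigr)$ and integrating by parts twice so that only the \emph{undifferentiated}, uniformly bounded $\sigma_a(y)$ appears in the kernel --- and then supports the existence of the $a\to 0$ limits and the order-by-order matching $\psi_{(n)}(0^+)=\psi'_{(n-1)}(0^+)$ purely \emph{numerically} (Figure \ref{ctdensity4}). So your proposal, if it worked, would establish strictly more than the paper does; it must therefore be judged on whether it closes the genuinely hard step, and it does not. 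The central assertion of your induction --- that $V_{a,n}(x)\psi_a^{(N-n)}(x)\to c_{n,N}\,\delta'(x)+d_{n,N}\,\delta(x)$ --- is exactly the missing content. Rescaling $y=at$ gives $\int V_{a,n}(y)f(y)\,\D{y}=a^{-n}\int(-\sigma(t))^{n-1}\sigma''(t)\,t^{-n}f(at)\,\D{t}$, which for $n\geq 2$ diverges unless delicate moment cancellations occur, and whose limit depends on the behaviour of $f=\psi_a^{(N-n)}$ \emph{inside} the boundary layer $|y|\lesssim a$. Your induction hypothesis (pointwise convergence at fixed $x\neq 0$ to a piecewise-smooth profile) carries no information about $\psi_a^{(N-n)}(at)$ at fixed $t$, which is what actually enters; you would need to strengthen the induction to include convergence of inner profiles $\psi_a^{(m)}(at)\to P_m(t)$ with uniform error control, and nothing in the Gr\"onwall or dominated-convergence machinery of Sections \ref{phihi}--\ref{grom} supplies this.

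Your fallback route --- expanding the self-consistent equation \eqref{self} in $\beta$ --- does not apply ``verbatim'' either, because it merely relocates the same difficulty. The kernel of \eqref{self} is built from $\phi^0_{a,\beta}$, which depends on $\beta$ nontrivially: it contains $1/(1+\beta\sigma_a'(x))$, whose $\beta$-expansion coefficients at fixed $a$ involve powers of $\sigma_a'(x)\sim \sigma'(0)/a$ near $x=0$ and hence blow up as $a\to 0$. The uniform-in-$a$ bounds of Section \ref{phihi} are proved at \emph{fixed} $\beta>0$ and carry inverse powers of $\beta$ (e.g.\ the bound $|\sigma''(t)/(\beta^2\sigma(t)\sigma'(t)^2)|$), so they degenerate precisely in the small-$\beta$ regime the expansion probes; commuting the $a\to 0$ limit with the $\beta$-expansion of $\phi^0_{a,\beta}$ is the conjecture in disguise. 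Note also that the $a$-dependent radius of convergence you correctly flag ($|\beta|\lesssim a/\sigma'(0)$) shrinks to zero as $a\to 0$, which is why soft analyticity arguments cannot decide the question, and the paper's Appendix \ref{app} exhibits exactly the danger: for the closely related equation there, the term-by-term limit \eqref{pert} produces a discontinuity that the non-perturbative solution does not have. Any genuine proof must rule out such perturbative artefacts by controlling the boundary layer quantitatively --- the one idea your outline, like the paper, does not yet contain.
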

Without loss of generality, we will focus on the solution in the segment $[0,x_0]$ for a given $x_0>0$, and assume the boundary condition
\begin{equation}\label{bc}
\psi_{a,\beta}(x_0)=1+\beta \frac{\sigma_a(x_0)}{x_0}\,,
\end{equation}
which turns out to be convenient for the calculations.  We have the following Lemma
\begin{property}\label{lemmaa}
We have for all integer $m$, at fixed $a>0$ and $x\in[0,x_0]$, with the boundary condition \eqref{bc}
\begin{equation}\label{lemma}
\psi_{a,\beta}(x)=\left(1+\beta\frac{\sigma_a(x)}{x} \right)\sum_{n=0}^m \beta^ng_n(x)+\mathcal{O}(\beta^{m+1})\,,
\end{equation}
with
\begin{equation}
g_0(x)=\frac{e^{ikx}-e^{-ikx}}{e^{ikx_0}-e^{-ikx_0}}\,,
\end{equation}
and
\begin{equation}\label{rec}
g_{n+1}(x)=\int_0^{x_0} \sigma_a(y) \left(\frac{\D{}}{\D{y}}\right)^2 \left[j(x,y)g_n(y)\right]\D{y}\,,
\end{equation}
with the function
\begin{equation}\label{j}
j(x,y)=\begin{cases}
\frac{1}{2iky}\left[e^{ik(x-y)}-e^{-ik(x-y)}-(e^{ikx}-e^{-ikx})\frac{e^{ik(x_0-y)}-e^{-ik(x_0-y)}}{e^{ikx_0}-e^{-ikx_0}}\right]\qquad \text{if }y<x\\
-\frac{1}{2iky}\left[(e^{ikx}-e^{-ikx})\frac{e^{ik(x_0-y)}-e^{-ik(x_0-y)}}{e^{ikx_0}-e^{-ikx_0}}\right]\qquad \text{if }y>x
\end{cases}\,.
\end{equation}
\end{property}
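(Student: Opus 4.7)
The plan is to convert the Schr\"odinger equation \eqref{eqv} into an integral equation for an auxiliary function $v$ defined by the factorization $\psi_{a,\beta}(x)=h(x)v(x)$ with $h(x)\equiv 1+\beta\sigma_a(x)/x$, and then expand this integral equation order by order in $\beta$. First I would note that $V_{a,\beta}$ is jointly smooth in $(\beta,x)$ on a neighbourhood of $\{0\}\times[0,x_0]$: the denominator $x+\beta\sigma_a(x)$ vanishes only at $x=0$, but so does $\sigma_a''(x)$ (since $\sigma_a$ is odd), so $V_{a,\beta}$ extends smoothly to the origin. Standard smooth-dependence for ODEs then guarantees that $\psi_{a,\beta}$ admits the Taylor expansion \eqref{exp}. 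Using the identity $p''=V_{a,\beta}p$ for $p\equiv xh=x+\beta\sigma_a$, which was established in Section \ref{nonpert} for the $k=0$ analysis, the substitution $\psi_{a,\beta}=hv$ collapses the potential term to $V_{a,\beta}(x)\psi_{a,\beta}(x)=\beta\sigma_a''(x)v(x)/x$, and the Schr\"odinger equation reduces to $\psi_{a,\beta}''+k^2\psi_{a,\beta}=\beta\sigma_a''(x)v(x)/x$ with boundary conditions $\psi_{a,\beta}(0)=0$ and $\psi_{a,\beta}(x_0)=h(x_0)$.

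Next I would invert the free operator $\partial_x^2+k^2$ with Dirichlet boundary conditions on $[0,x_0]$. A direct check shows that $y\,j(x,y)$, with $j$ as in \eqref{j}, is exactly this Green function: it is symmetric in $(x,y)$, vanishes at $y\in\{0,x_0\}$, satisfies the free equation in $y$ away from $y=x$, and has a unit jump in $\partial_y[yj]$ there. The solution can then be written as
\begin{equation}
h(x)v(x)=h(x_0)g_0(x)+\beta\int_0^{x_0}j(x,y)\,\sigma_a''(y)\,v(y)\,\D{y},
\end{equation}
with $g_0(x)=\sin(kx)/\sin(kx_0)$ the normalized free odd solution. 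Plugging $v=\sum_{n\geq 0}\beta^n g_n$ into this integral equation and matching powers of $\beta$ yields $g_0$ as in the Lemma, and for $n\geq 1$
\begin{equation}
g_n(x)=-\frac{\sigma_a(x)}{x}g_{n-1}(x)+\delta_{n,1}\,\frac{\sigma_a(x_0)}{x_0}g_0(x)+\int_0^{x_0}j(x,y)\,\sigma_a''(y)\,g_{n-1}(y)\,\D{y},
\end{equation}
together with $g_n(0)=g_n(x_0)=0$.

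The final step is to identify this explicit formula with \eqref{rec} by integrating by parts twice inside the right-hand side of \eqref{rec}, with $\partial_y^2$ read in the pointwise (classical) sense on $[0,x_0]\setminus\{x\}$. Splitting the integral at $y=x$ and applying integration by parts on each subinterval, the contributions organize as follows: (i) the discontinuity of $\partial_y j(x,y)$ at $y=x$, inherited from the unit jump of $\partial_y[yj]$ and equal to $1/x$, produces through the corner boundary terms exactly $-(\sigma_a(x)/x)\,g_{n-1}(x)$; (ii) the boundary terms at $y=0$ vanish because $\sigma_a(0)=0$ and $g_{n-1}(0)=0$; (iii) the boundary term at $y=x_0$ reduces to $\sigma_a(x_0)\,\partial_y(j g_{n-1})|_{y=x_0}=(\sigma_a(x_0)/x_0)\,g_0(x)\,g_{n-1}(x_0)$, which gives $(\sigma_a(x_0)/x_0)g_0(x)$ for $n=1$ and vanishes for $n\geq 2$; (iv) a second integration by parts leaves the bulk integral $\int_0^{x_0}\sigma_a''(y)\,j(x,y)\,g_{n-1}(y)\,\D{y}$, using $j(x,x_0)=0$ and $g_{n-1}(0)=0$. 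Summing (i)--(iv) reproduces the explicit recursion derived from the integral equation. The main obstacle is the careful bookkeeping at the corner $y=x$: $\partial_y^2$ must be interpreted classically, not distributionally, so that the jump of $\partial_y j$ is accounted for only once---through the corner boundary terms of the integration by parts, producing the $-(\sigma_a(x)/x)g_{n-1}(x)$ piece---rather than also via an independent $(1/x)\delta(y-x)$ contribution that would double-count it.
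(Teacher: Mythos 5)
Your proof is correct, and it takes a genuinely different route from the paper's. The paper argues by induction on $m$: it names the remainder $\beta^{m+1}\chi(x)$, solves for $\chi$ by variation of parameters with the free odd kernel $\tfrac{1}{2ik}(e^{ik(x-y)}-e^{-ik(x-y)})$ on $[0,x]$, integrates by parts twice at each inductive step (picking up the corner terms $\tfrac{\sigma_a(x)}{x}g_m(x)$ and $-\tfrac{\sigma_a'(0)g_m'(0)}{2ik}(e^{ikx}-e^{-ikx})$), and only then fixes the integration constant $A$ from the boundary condition \eqref{bc} using $g_n(x_0)=0$; the kernel $j$ emerges there as bookkeeping of the Duhamel term plus the $A$-term. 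You instead exploit the \emph{exact} factorization $\psi_{a,\beta}=(1+\beta\sigma_a(x)/x)\,v(x)$, valid to all orders in $\beta$ because $p=x+\beta\sigma_a$ satisfies $p''=V_{a,\beta}\,p$, which turns \eqref{eqv} into a single integral equation exact in $\beta$; your identification of $y\,j(x,y)$ as the Dirichlet Green's function of $\partial_y^2+k^2$ on $[0,x_0]$ checks out (indeed $y\,j(x,y)=-\sin(k\,\mathrm{min}(x,y))\sin(k(x_0-\mathrm{max}(x,y)))/(k\sin kx_0)$, symmetric, vanishing at both ends, unit jump in $\partial_y$), and your conversion of the resulting recursion into \eqref{rec} by two integrations by parts is sound: the corner jump $[\partial_y j]_{y=x}=1/x$ yields $-(\sigma_a(x)/x)g_{n-1}(x)$, the $y=0$ terms die by $\sigma_a(0)=0$ and $g_{n-1}(0)=0$, and the $y=x_0$ term equals $(\sigma_a(x_0)/x_0)g_0(x)g_{n-1}(x_0)$, nonzero only for $n=1$ since $j(x_0,\cdot)\equiv 0$ forces $g_n(x_0)=0$ for $n\geq 1$ — I verified each of these. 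Your approach buys an exact (all-orders) integral equation where the paper manipulates $\mathcal{O}(\beta)$ remainders somewhat informally, a conceptual explanation of where $j$ comes from, and a clean separation between the analytic input (smooth dependence of the ODE solution on $\beta$, which requires $\sin kx_0\neq 0$ — implicitly assumed by the paper as well) and the algebra; the paper's induction buys brevity and self-containedness, needing neither the Green-function identification nor a parametric-smoothness theorem. One phrasing nit: joint smoothness of $V_{a,\beta}$ near $x=0$ is best justified by writing $V_{a,\beta}=\beta\,(\sigma_a''(x)/x)\big/(1+\beta\sigma_a(x)/x)$, where both ratios extend smoothly by oddness of $\sigma_a$; ``numerator and denominator both vanish'' is not by itself an argument.
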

\begin{proof}
Let us prove this Lemma by recurrence on $m$. At leading order in $\beta$, the odd solution to \eqref{eqv} with the boundary condition \eqref{bc} is
\begin{equation}
\psi_{a,\beta}(x)=\frac{e^{ikx}-e^{-ikx}}{e^{ikx_0}-e^{-ikx_0}}+\mathcal{O}(\beta)\,.
\end{equation}
Let us now assume that the Lemma is true for $m$, and call $\beta^{m+1}\chi(x)$ the remainder $\mathcal{O}(\beta^{m+1})$ in \eqref{lemma}. Using the method of variation of parameters, the odd solution to the equation
\begin{equation}
\psi''(x)+k^2\psi(x)=f(x)\,,
\end{equation}
for a given arbitrary even function $f(x)$ is 
\begin{equation}
\psi(x)=\frac{1}{2ik}\int_0^ x f(y)(e^{ik(x-y)}-e^{-ik(x-y)})\D{y}+A(e^{ikx}-e^{-ikx})\,,
\end{equation}
with $A$ an integration constant. Applying this to \eqref{eqv} for $\psi=\chi$ and
\begin{equation}
f(x)=\frac{\sigma_a''(x)}{x}\sum_{k=0}^m \left(-\sigma_a(x) \right)^{m-k}\left(g_k(x)+\sigma_a(x)g_{k-1}(x)\right)\,,
\end{equation}
with by convention $g_{-1}(x)=0$, we find
\begin{equation}
\chi(x)=\frac{1}{2ik}\int_0^x \sigma_a''(y)\frac{g_m(y)}{y}(e^{ik(x-y)}-e^{-ik(x-y)})\D{y}+A(e^{ikx}-e^{-ikx})+\mathcal{O}(\beta)\,,
\end{equation}
with $A$ an integration constant. Integrating by part twice, we obtain
\begin{equation}
\begin{aligned}
\chi(x)&=\frac{1}{2ik}\int_0^x \sigma_a(y) \left(\frac{\D{}}{\D{y}}\right)^2\left[\frac{g_m(y)}{y}(e^{ik(x-y)}-e^{-ik(x-y)})\right]\D{y}\\
&+\frac{\sigma_a(x)}{x}g_m(x)-\frac{\sigma_a'(0)g_m'(0)}{2ik}(e^{ikx}-e^{-ikx})+A(e^{ikx}-e^{-ikx})+\mathcal{O}(\beta)\,.
\end{aligned}
\end{equation}
Hence we obtain
\begin{equation}
\psi_{a,\beta}(x)=\left(1+\beta\frac{\sigma_a(x)}{x} \right)\sum_{n=0}^{m+1} \beta^ng_n(x)+\mathcal{O}(\beta^{m+2})\,,
\end{equation}
with
\begin{equation}
\begin{aligned}
g_{m+1}(x)&=\frac{1}{2ik}\int_0^x \sigma_a(y) \left(\frac{\D{}}{\D{y}}\right)^2\left[\frac{g_m(y)}{y}(e^{ik(x-y)}-e^{-ik(x-y)})\right]\D{y}\\
&-\frac{\sigma_a'(0)g_m'(0)}{2ik}(e^{ikx}-e^{-ikx})+A(e^{ikx}-e^{-ikx})\,.
\end{aligned}
\end{equation}
Now, imposing the boundary condition \eqref{bc} to determine $A$, and using that $g_n(x_0)=0$ for $n>0$, yields the form \eqref{rec} for $g_{m+1}(x)$.
\end{proof}

To investigate Conjecture \ref{conjecture} numerically, we use Lemma \ref{lemmaa} to write for any $m$
\begin{equation}
\psi_{a,\beta}(x)=\sum_{n= 0}^m\psi_{(n)}(x)\beta^n+\mathcal{O}(\beta^{m+1})\,,
\end{equation}
with
\begin{equation}
\psi_{(n)}(x)=g_n(x)+\frac{\sigma_a(x)}{x}g_{n-1}(x)\,,
\end{equation}
with $g_n(x)$ satisfying the recurrence \eqref{rec} and $g_{-1}(x)=0$. This can be evaluated numerically efficiently. To check the validity of Conjecture \ref{conjecture}, one has to ensure that $\psi_{(n)}(x)$ has a finite limit when $a\to 0$, and that in this limit one has $\psi_{(n)}(0^+)=\psi'_{(n-1)}(0^+)$. This is shown to be observed for the first four orders in $\beta$ in Figure \ref{ctdensity4}.

\begin{figure}[H]
\begin{center}
 \includegraphics[scale=0.35]{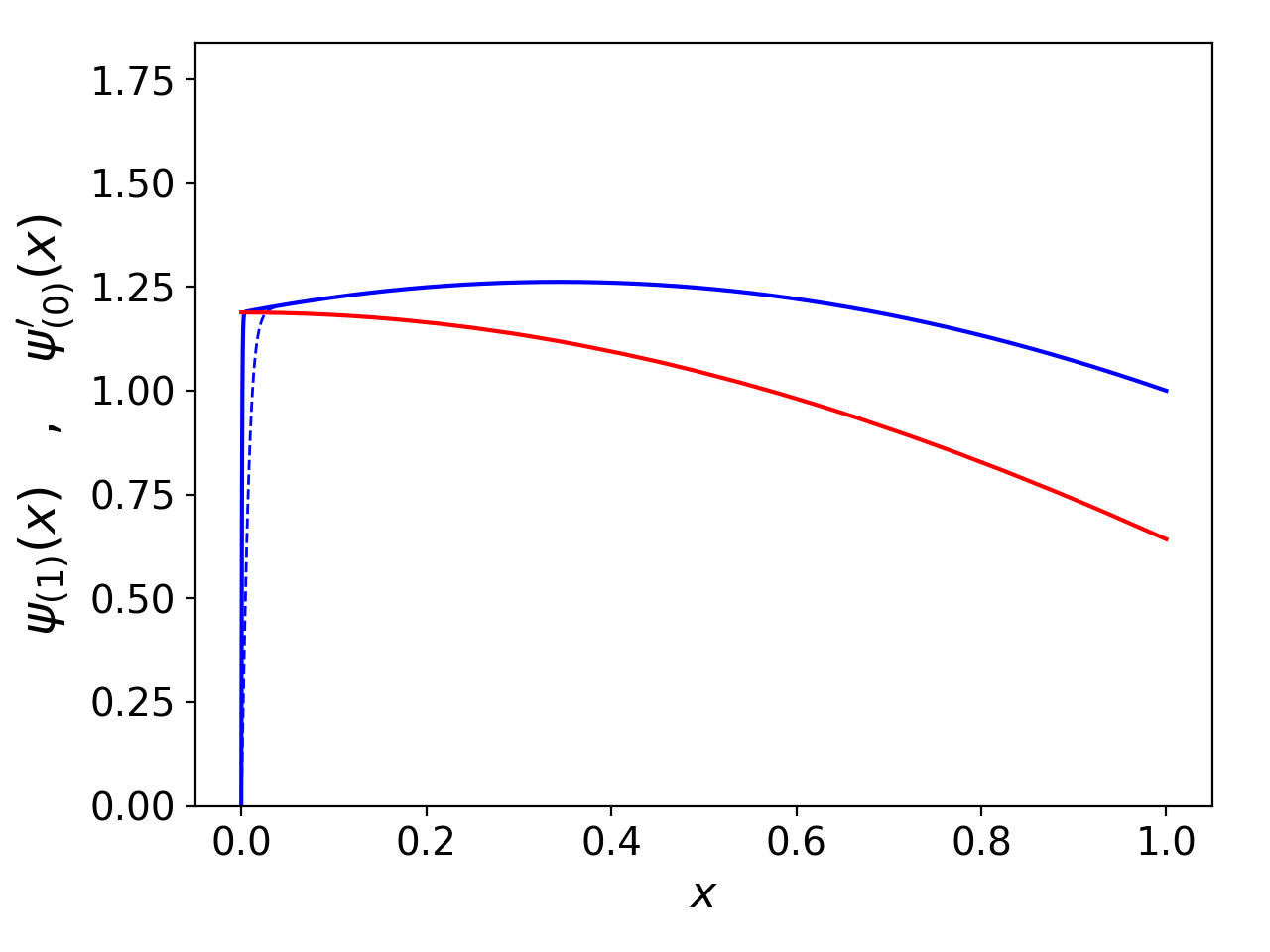}
 \includegraphics[scale=0.35]{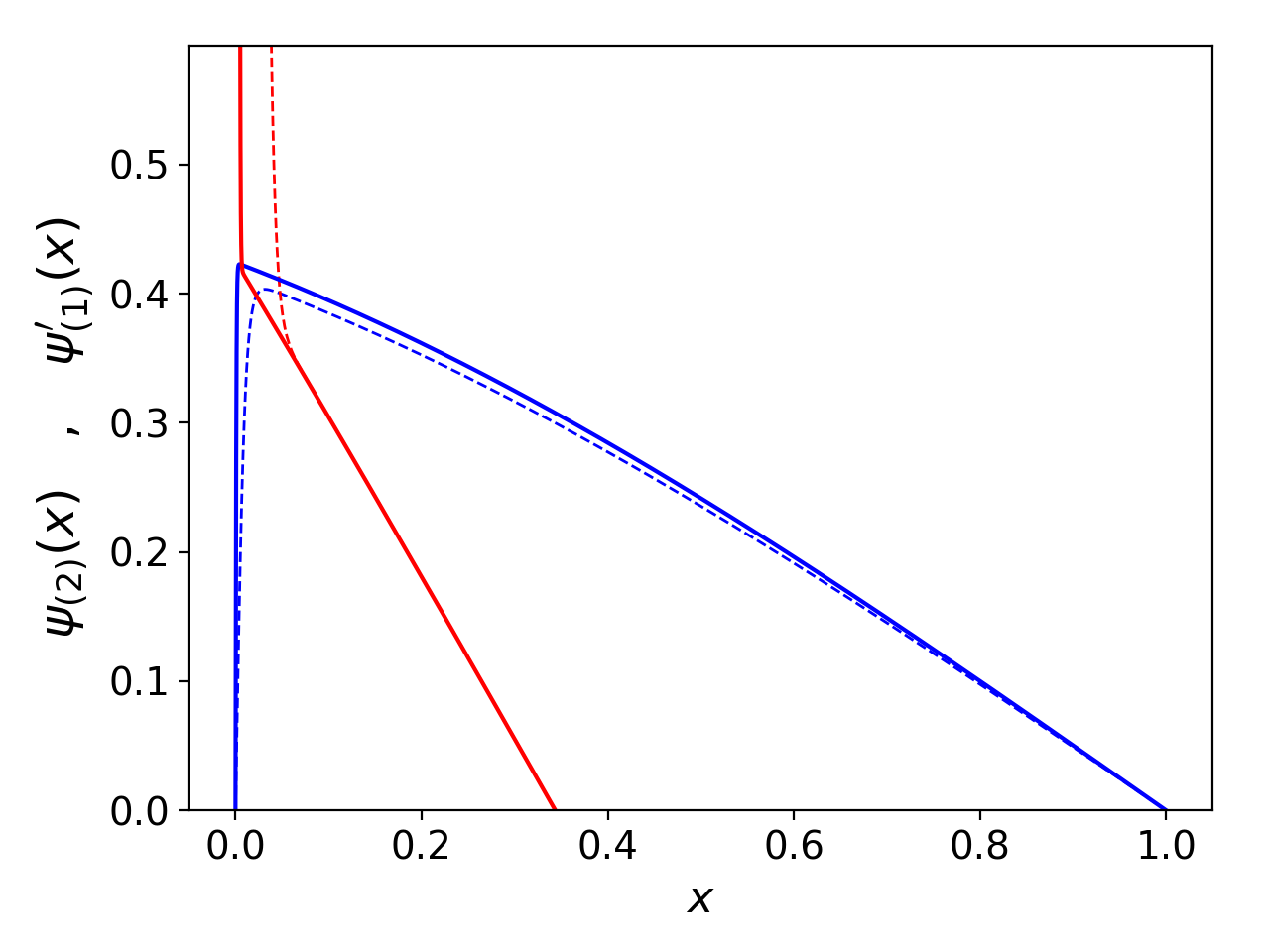}\\
 \includegraphics[scale=0.35]{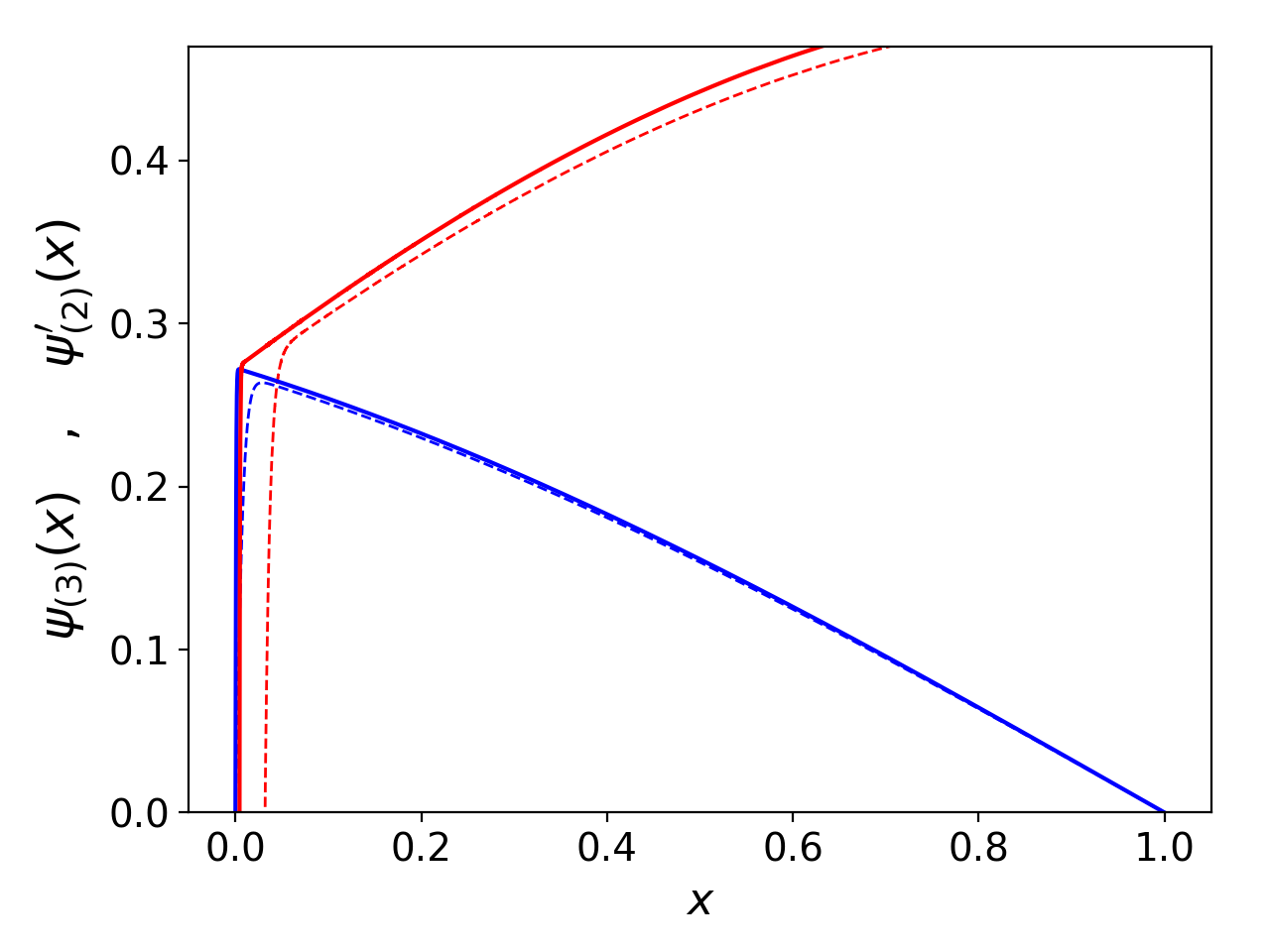}
 \includegraphics[scale=0.35]{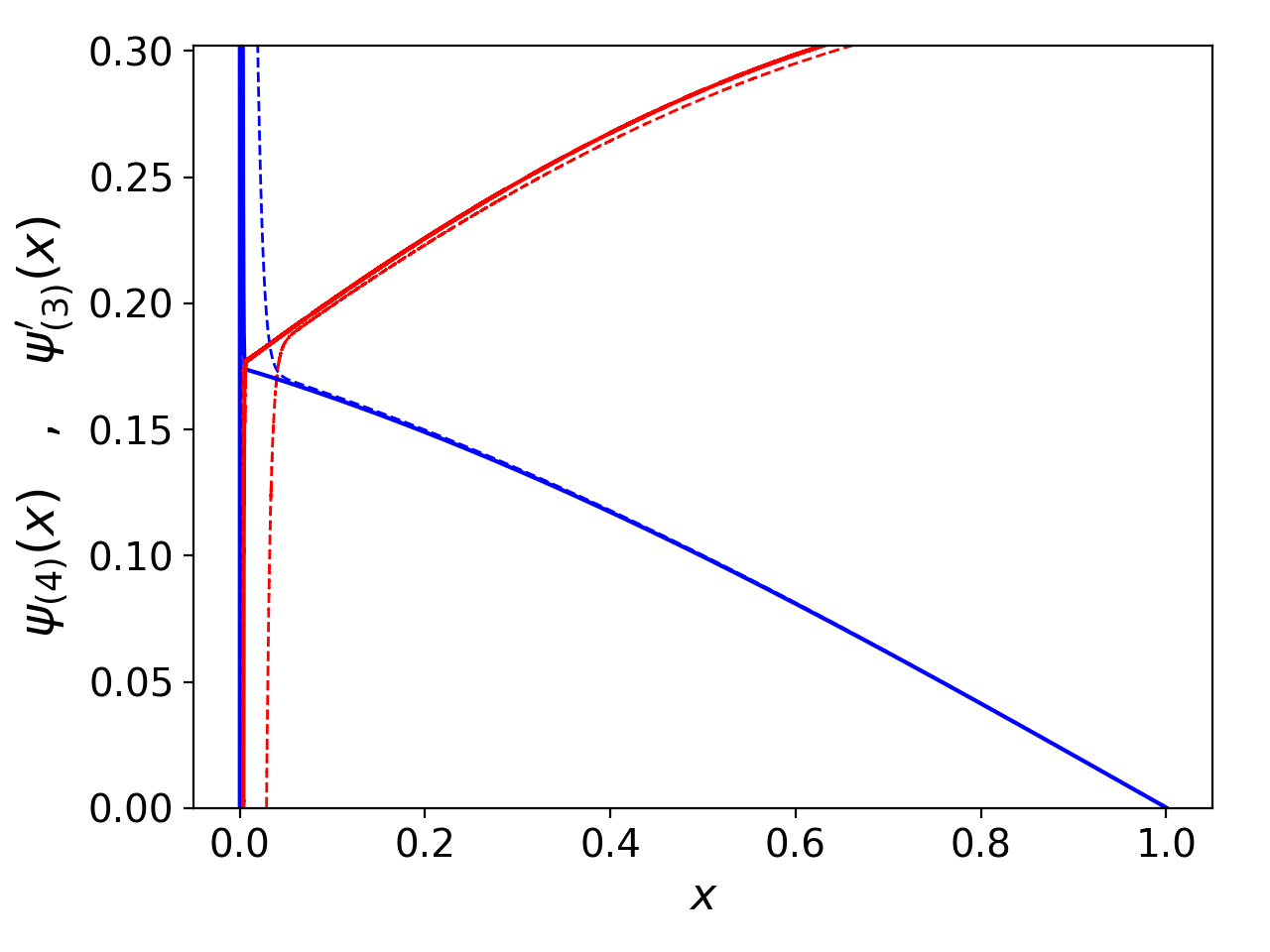}
\caption{$\psi_{(n+1)}(x)$ (blue) and $\psi_{(n)}'(x)$ (red) as a function of $x$, for $n=0,1,2,3$ in reading direction. We considered $\sigma_a(x)=\tanh(x/a)$ with $a=10^{-2}$ (dashed) and $a=10^{-3}$ (plain), $x_0=1$, and $2.10^7$ discretization points. None of the functions plotted are actually divergent at $0$. Conjecture \ref{conjecture} is satisfied if the blue and red curves take the same value for $x\to 0^+$, when $a\to 0$.}    
\label {ctdensity4}
\end{center}
\end {figure}

\section{Example of application: the Cheon-Shigehara gas \label{examp}}
In this Section we show that one can use our potential to compute the energy levels of the Lieb-Liniger model at large coupling.

\subsection{The model}
We consider a gas of $N$ fermions in interaction with the potential $V_{a,\beta}(x)$
\begin{equation}\label{hams}
H^f_{a,\beta}=-\sum_{j=1}^N \partial_{x_j}^2+2\sum_{j<k}V_{a,\beta}(x_j-x_k)\,.
\end{equation}
When $a\to 0$, this Hamiltonian becomes the Cheon-Shigehara gas \cite{CS99,GBE21}. It is equivalent to the Lieb-Liniger gas for $N$ bosons
\begin{equation}
H^b_c=-\sum_{j=1}^N \partial_{x_j}^2+2c\sum_{j<k}\delta(x_j-x_k)\,.
\end{equation}
We are going to perform a perturbative expansion of the energy levels of $H^f_{a,\beta}$, and show that when $a\to 0$ we recover the energy levels of $H^b_c$ with $c=1/\beta$.

In order to be able to apply Schr\"odinger perturbation theory, we put $H_{a,\beta}^f$ on a lattice. We consider thus a lattice model with $M=L/\kappa$ sites with periodic boundary conditions and Hamiltonian
\begin{equation}
\begin{aligned}
H&=H_0+\mathcal{V}\\
H_0&=-\frac{1}{L\kappa^2}\sum_{n=-M/2}^{M/2-1} c^\dagger_{n+1}c_{n}+c^\dagger_{n}c_{n+1}-2c^\dagger_nc_n\\
\mathcal{V}&=\frac{1}{L}\sum_{n,m=-M/2}^{M/2-1} V_{a,\beta}(n\kappa)c_{n+m}^\dagger c_m^\dagger c_{m}c_{n+m}\,,
\end{aligned}
\end{equation}
with $\{c_n,c_m^\dagger\}=\delta_{n,m}$ fermions satisfying canonical anticommutation relations. The potential $V_{a,\beta}$ is given by \eqref{potential} and $\kappa$ is a rescaling parameter. In the limit $\kappa\to 0$, this model is the second quantization formulation of the Hamilonian \eqref{hams} with the fermionic field $c_n=\sqrt{\kappa}\psi(n\kappa)$.\\

\subsection{Perturbation theory of energy levels}
 Defining the Fourier transform
\begin{equation}
c(k)=\frac{1}{\sqrt{M}}\sum_n c_n e^{ikn}\,,
\end{equation}
the free part $H_0$ is diagonalized by
\begin{equation}
H_0=\frac{4 }{L\kappa^2}\sum_{k\in {\rm R}} \sin^2(k/2)c^\dagger(k) c(k)\,,
\end{equation}
with ${\rm R}=\{\tfrac{2\pi n}{M},n=-M/2,...,M/2-1\}$.\\

Let us now fix an eigenstate of the free Hamiltonian at $\beta=0$, i.e. a set $\pmb{\lambda}\subset {\rm R}$. We will assume for simplicity it has zero momentum. 

Using perturbation theory, we find the energy density at order $\beta^2$
\begin{equation}
E(\beta)=E_0+E_1+E_2+\mathcal{O}(\beta^3)\,,
\end{equation}
with
\begin{equation}
\begin{aligned}
E_0&=\frac{4}{\kappa^2L}\sum_{\lambda\in\pmb{\lambda}}\sin^2(\lambda/2)\\
E_1&=\frac{1}{L^2}\sum_{\lambda,\mu\in\pmb{\lambda}} \left[\tilde{V}_{a,\beta}(0)-\tilde{V}_{a,\beta}(\lambda-\mu)\right]\\
E_2&=\frac{\kappa^2}{4L^3}\sum_{\substack{\lambda,\mu\in\pmb{\lambda}\\ \nu\in{\rm R}\\ \lambda+\nu\notin\pmb{\lambda}\\\mu-\nu\notin\pmb{\lambda}}}\frac{\left[\tilde{V}_{a,\beta}(\lambda-\mu+\nu)-\tilde{V}_{a,\beta}(\nu)\right]^2}{\sin^2 \tfrac{\lambda}{2}+\sin^2 \tfrac{\mu}{2}-\sin^2 \tfrac{\lambda+\nu}{2}-\sin^2 \tfrac{\mu-\nu}{2}}\,,
\end{aligned}
\end{equation}
with 
\begin{equation}
\tilde{V}_{a,\beta}(\lambda)=\kappa\sum_{n=-M/2}^{M/2-1} V_{a,\beta}( n\kappa)e^{i\lambda n}\,.
\end{equation}

\subsection{Low-momentum and thermodynamic limit}
We now rescale the momenta as $\lambda\to \lambda\kappa$, and take both the thermodynamic limit $L\to\infty$ and the low-momentum limit $\kappa\to 0$. We  denote $\rho(\lambda)$ the particle density and $\rho_h(\lambda)=\tfrac{1}{2\pi}-\rho(\lambda)$ the hole density. We obtain
\begin{equation}
\begin{aligned}
E_0&=\int_{-\infty}^\infty \lambda^2\rho(\lambda)\D{\lambda}\\
E_1&=\iint_{-\infty}^\infty \left[\hat{V}_{a,\beta}(0)-\hat{V}_{a,\beta}(\lambda-\mu)\right]\rho(\lambda)\rho(\mu)\D{\lambda}\D{\mu}\\
E_2&=-\frac{1}{2}\iiint_{-\infty}^\infty \frac{\left[\hat{V}_{a,\beta}(\lambda-\mu+\nu)-\hat{V}_{a,\beta}(\nu)\right]^2}{\nu(\lambda-\mu+\nu)}\rho(\lambda)\rho(\mu)2\pi\rho_h(\lambda+\nu)\rho_h(\mu-\nu)\D{\lambda}\D{\mu}\D{\nu}\,,
\end{aligned}
\end{equation}
with the Fourier transform
\begin{equation}
\hat{V}_{a,\beta}(\lambda)=\int_{-\infty}^\infty V_{a,\beta}(x)e^{i\lambda x}\D{x}\,.
\end{equation}
We will denote $\mathcal{E}_a^{(n)}$ the truncation at order $\beta^n$ of the series of $E(\beta)$ in $\beta$ at fixed $a$.

\subsection{Regularization limit}
Let us now study the regularization limit $a\to 0$ of the expansion in $\beta$. We expand at fixed $a$
\begin{equation}
V_{a,\beta}(x)=\beta \frac{\sigma''(x/a)}{xa^2}-\beta^2\frac{\sigma''(x/a)\sigma(x/a)}{x^2a^2}+\mathcal{O}(\beta^3)\,,
\end{equation}
from which we find
\begin{equation}
\hat{V}_{a,\beta}(k)-\hat{V}_{a,\beta}(0)=\beta k^2-\frac{\beta^2 k^2}{4a\pi}\int_{-\infty}^\infty [\widehat{\sigma'}(\omega)]^2\D{\omega}+\mathcal{O}(a)+\mathcal{O}(\beta^3)\,.
\end{equation}
Here the order of the expansions is first $\beta\to 0$ at fixed $a$ and then $a\to 0$. Hence it yields
\begin{equation}
E_1=-\beta \int_{-\infty}^\infty \int_{-\infty}^\infty(\lambda-\mu)^2\rho(\lambda)\rho(\mu)\D{\lambda}\D{\mu}\left[1-\frac{\beta}{4a\pi}\int_{-\infty}^\infty [\widehat{\sigma'}(\omega)]^2\D{\omega} \right]+\mathcal{O}(a)+\mathcal{O}(\beta^3)\,.
\end{equation}
It is divergent in $1/a$ at order $\beta^2$. To investigate $E_2$, we write
\begin{equation}
\hat{V}_{a,\beta}(x)-\hat{V}_{a,\beta}(0)=\frac{\beta}{a^2}F(xa)+\mathcal{O}(\beta^2)\,,
\end{equation}
with
\begin{equation}
F(x)=\int_0^{x} z \widehat{\sigma'}(z)\D{z}\,.
\end{equation}
We plug this in the expression for $E_2$, separate $\rho_h(\lambda)=\tfrac{1}{2\pi}-\rho(\lambda)$ and perform a change of variable $\nu\to \nu/a$ for the case where $\tfrac{1}{2\pi}$ is taken twice. It yields
\begin{equation}
\begin{aligned}
E_2=E_{2}^{\rm reg}+E_{2}^{\rm sing}
\end{aligned}
\end{equation}
where
\begin{equation}
E_{2}^{\rm sing}=-\frac{\beta^2}{4a^3\pi} \int_{-\infty}^\infty \int_{-\infty}^\infty\rho(\lambda)\rho(\mu)\frac{(F(\nu+a(\lambda-\mu))-F(\nu))^2}{\nu(\nu+a(\lambda-\mu))}\D{\lambda}\D{\mu}\D{\nu}\,,
\end{equation}
and
\begin{equation}\label{reg}
\begin{aligned}
E_{2}^{\rm reg}=\frac{\beta^2}{2}&\int_{-\infty}^\infty \int_{-\infty}^\infty \int_{-\infty}^\infty  \frac{(\lambda-\mu)^2(\lambda-\mu+2\nu)^2}{\nu(\lambda-\mu+\nu)}\rho(\lambda)\rho(\mu)\\
&\times\Big[\rho(\lambda+\nu)+\rho(\mu-\nu)-2\pi \rho(\lambda+\nu)\rho(\mu-\nu)\Big]\D{\lambda}\D{\mu}\D{\nu}+\mathcal{O}(a)+\mathcal{O}(\beta^3)\,.
\end{aligned}
\end{equation}
Focusing on $E_{2}^{\rm sing}$ first, we write
\begin{equation}
\frac{1}{\nu(\nu+a(\lambda-\mu))}=\frac{1}{a(\lambda-\mu)}\left( \frac{1}{\nu}-\frac{1}{\nu+a(\lambda-\mu)}\right)\,,
\end{equation}
and do a change of variable $\nu\to \nu-a(\lambda-\mu)$ for the second fraction, which yields, expanding $F$ at $\nu$
\begin{equation}
E_2^{\rm sing}=-\frac{\beta^2}{4a\pi} \int_{-\infty}^\infty \int_{-\infty}^\infty(\lambda-\mu)^2\rho(\lambda)\rho(\mu)\D{\lambda}\D{\mu}\left[\int_{-\infty}^\infty [\widehat{\sigma'}(\omega)]^2\D{\omega} \right]+\mathcal{O}(a)+\mathcal{O}(\beta^3)\,.
\end{equation}
Remarkably, the form of the potential \eqref{potential} exactly makes the divergent piece in $1/a$ compensate between $E_1$ and $E_2$. Let us now focus on the regular part \eqref{reg}. Doing a change of variable $\lambda'=\lambda+\nu$, $\mu'=\mu-\nu$ and $\nu'=-\nu$ one sees that the term with four $\rho$'s vanishes. Doing the change of variable $\mu'=\lambda,\lambda'=\mu,\nu'=-\nu$, one sees that the terms with three $\rho$'s are equal, so that
\begin{equation}
\begin{aligned}
E_{2}^{\rm reg}&=\beta^2\int_{-\infty}^\infty \int_{-\infty}^\infty \int_{-\infty}^\infty  \frac{(\lambda-\mu)^2(2\nu-\lambda-\mu)^2}{(\nu-\lambda)(\nu-\mu)}\rho(\lambda)\rho(\mu)\rho(\nu)\D{\lambda}\D{\mu}\D{\nu}+\mathcal{O}(a)+\mathcal{O}(\beta^3)\\
&=\frac{3}{2}\beta^2\int_{-\infty}^\infty \int_{-\infty}^\infty(\lambda-\mu)^2\rho(\lambda)\rho(\mu)\rho(\nu)\D{\lambda}\D{\mu}\D{\nu}+\mathcal{O}(a)+\mathcal{O}(\beta^3)\,,
\end{aligned}
\end{equation}
where we used that $\rho$ is even, after several manipulations of the type
\begin{equation}
\int_{-\infty}^\infty \int_{-\infty}^\infty\frac{\mu}{\mu-\nu}\rho(\mu)\rho(\nu)\D{\mu}\D{\nu}=\frac{1}{2}\int_{-\infty}^\infty \int_{-\infty}^\infty \rho(\mu)\rho(\nu)\D{\mu}\D{\nu}\,,
\end{equation}
to eliminate all the fractions. Gathering everything, we obtain that $\mathcal{E}^{(2)}_a$ has a well-defined limit when $a\to 0$ that reads
\begin{equation}\label{res1c}
\mathcal{E}^{(2)}_0=\int_{-\infty}^\infty \lambda^2\rho(\lambda)\D{\lambda}\left(1-2\beta \mathcal{D}+3\beta^2\mathcal{D}^2 \right)\,,
\end{equation}
with
\begin{equation}
\mathcal{D}=\int_{-\infty}^\infty \rho(\lambda)\D{\lambda}\,.
\end{equation}
This is indeed the Bethe ansatz result for the Lieb-Liniger model with $\beta=\frac{2}{c}$ at order $1/c^2$.

\section{Summary and discussion}
We have shown that the potential \eqref{potential} produces in the limit $a\to 0$ a discontinuity in the fermionic wave function given by \eqref{shrod2}. This provides thus a regularization of the only pointlike interaction between fermions given by a self-adjoint extension in \eqref{gen}. The particularity of this regularization is that contrary to all previous ones, the potential is small when the discontinuity is small, implementing the strong-weak duality between fermions and bosons in 1D given by the Girardeau mapping. This opens the way to perturbative studies of strongly coupled bosons, as done in the companion paper \cite{GBE21}. Besides, we gave in Section \ref{pertee} numerical evidence for the possibility of commuting the limit $a\to 0$ and the expansion at small $\beta$ in the two-particle wave function. In Section \ref{examp}, we showed that in the many-body case the two limits commute at least up to order $\beta^2$ included.

The attractive case $\beta<0$ deserves some attention, as the potential given in this paper is not applicable to this case. In fact, one can see that imposing to preserve the strong-weak duality and being able to commute the limit $a\to 0$ and the expansion in $\beta$ essentially constrains the potential to be of the form \eqref{potential}, with a possible $\beta$ dependence in $\sigma$. But then one sees that in the case $\beta<0$ there will always be a singularity in the potential for $x\approx |\beta|$. Cutting the potential for $x>\beta$ to avoid this singularity would spoil the commutation of the two limits. This suggests that in the case $\beta<0$ a regularization consistent with the strong-weak duality might not exist. This important difference would reflect the very different physics of the Lieb-Liniger model depending on the sign of $\beta$, with bound states appearing only for $\beta<0$. But we leave a precise study of these aspects for future work.

\appendix
\section{Appendix}
\subsection{Study of \eqref{eqdelta3}}\label{app}
In this Appendix we study the equation
\begin{equation}\label{eqdelta5}
\psi_a''(x)+k^2\psi_a(x)=\beta \sigma_a''(x)\psi_a'(x)+\beta\sigma_a'(x)\psi_a''(x)\,,
\end{equation}
with $\sigma_a(x)\to \sign(x)$ when $a\to 0$.

\subsubsection{Absence of discontinuity \label{absenceed}}
Integrating \eqref{eqdelta5} between $x_0$ and $x$, one has
\begin{equation}\label{newo}
\psi'_a(x)-\beta \sigma'_a(x)\psi_a'(x)=\psi'_a(x_0)-\beta \sigma'_a(x_0)\psi_a'(x_0)-k^2\int_{x_0}^x \psi_a(y)\D{y}\,.
\end{equation}
 Let us now specify
 \begin{equation}
\sigma'_a(x)=\frac{1}{2a}\1_{|x|<a}\,.
\end{equation}
Integrating between $-a$ and $a$ one finds
\begin{equation}
(\psi_a(a)-\psi_a(-a))(1-\tfrac{\beta}{2a})=\mathcal{O}(a)\,,
\end{equation}
which imposes that
\begin{equation}\label{means}
\underset{a\to 0}{\lim}( \psi_a(a)-\psi_a(-a))=0\,.
\end{equation}
Now, integrating \eqref{newo} between $a$ and $a+\epsilon$ for $\epsilon>0$ we find
\begin{equation}\label{means2}
\psi_a(a+\epsilon)-\psi_a(a)=\mathcal{O}(\epsilon,a)\,.
\end{equation}
Defining
\begin{equation}
\psi(x)=\underset{a\to 0}{\lim}\,\psi_a(x)\,,
\end{equation}
we find that \eqref{means2} implies
\begin{equation}
\underset{a\to 0}{\lim}\,\psi_a(a)=\psi(0^+)\,.
\end{equation}
Similarly one has
\begin{equation}
\underset{a\to 0}{\lim}\,\psi_a(-a)=\psi(0^-)\,.
\end{equation}
Hence \eqref{means} means
\begin{equation}
\psi(0^+)=\psi(0^-)\,,
\end{equation}
and the function is continuous at zero.

\subsubsection{A discontinuity present perturbatively but absent non-perturbatively}
At first order in $\beta$ at fixed $a$, the odd solution to \eqref{eqdelta5} is
\begin{equation}
\begin{aligned}
&e^{ikx}-e^{-ikx}\\
&+\frac{\beta}{2ik}\int_0^x(e^{ik(x-y)}-e^{-ik(x-y)})(ik\sigma_a''(y)(e^{iky}+e^{-iky})-k^2\sigma_a'(y)(e^{iky}-e^{-iky}))\D{y}\,,
\end{aligned}
\end{equation}
whose limit $a\to 0$ is
\begin{equation}
e^{ikx}-e^{-ikx}+ik\beta \cos(kx)\sign(x)\,.
\end{equation}
This function has a discontinuity at $x=0$ that satisfies \eqref{shrod5} indeed. This is not in contradiction with Appendix \ref{absenceed}: a discontinuity can be present perturbatively, but absent non-perturbatively. \\

In order to illustrate this phenomenon, we consider the following simple tractable example
\begin{equation}
f_a''-\beta(\sigma_a' f_a')'=0\,.
\end{equation}
At leading order in $\beta$ at fixed $a$ one finds
\begin{equation}\label{pert}
f_a(x)=A(x+\beta\sigma_a(x))+B+\mathcal{O}(\beta^2)\,,
\end{equation}
with $A,B$ integration constants, which has indeed a discontinuity at zero in the limit $a\to 0$. However, the non-pertubative solution is
\begin{equation}
f_a(x)=A\int_{-1}^x \frac{\D{y}}{1-\beta\sigma_a'(y)}+B\,,
\end{equation}
with a principal value in case of a vanishing denominator. Let us investigate the behaviour at zero of this function in the limit $a\to 0$ if we specify for example
\begin{equation}
\sigma_a'(x)=\frac{2}{\pi}\frac{a}{a^2+x^2}\,.
\end{equation}
Then for a $\epsilon>0$ fixed
\begin{equation}
\int_{-1}^x \frac{\D{y}}{1-\beta\sigma_a'(y)}=\int_{-1}^{-\epsilon} \frac{\D{y}}{1-\beta\sigma_a'(y)}+\int_{-\epsilon}^\epsilon \frac{\D{y}}{1-\beta\sigma_a'(y)}+\int_{\epsilon}^x \frac{\D{y}}{1-\beta\sigma_a'(y)}\,.
\end{equation}
At fixed $\epsilon>0$ and $x>\epsilon$, the first and last integral go to 
\begin{equation}
\int_{-1}^{-\epsilon} \D{y}+\int_{\epsilon}^x \D{y}\,,
\end{equation}
while the second integral
\begin{equation}
\int_{-\epsilon}^\epsilon \frac{\D{y}}{1-\beta\sigma_a'(y)}=2\epsilon-\frac{4a\beta}{\pi \sqrt{\tfrac{2a\beta}{\pi }-a^2}}\Re \text{argtanh } \frac{\epsilon}{\sqrt{\tfrac{2a\beta}{\pi }-a^2}}
\end{equation}
goes to $2\epsilon$ in the limit $a\to 0$. The behaviour for $x<-\epsilon$ is immediate, so that the full integral converges to $\int_{-1}^x \D{y}$. Hence in the limit $a\to 0$
\begin{equation}
f_0(x)=Ax+A+B\,,
\end{equation}
which has no discontinuity.


%

\renewcommand\Affilfont{\fontsize{9}{10.8}\itshape}

\end{document}